\theoremstyle{plain}
\newtheorem{Theorem}{Theorem}
\newtheorem{Lemma}{Lemma}
\newtheorem{Proposition}{Proposition}
\theoremstyle{definition}
\newtheorem{Remark}{Remark}
\DeclareMathOperator{\ch}{ch} \DeclareMathOperator{\sh}{sh}
\title[Hirota-Kimura type discretization]
{Hirota-Kimura type discretization of the classical nonholonomic Suslov problem}
\author{Vladimir Dragovi\'c and Borislav Gaji\'c }
\email{vladad@mi.sanu.ac.yu and gajab@mi.sanu.ac.yu}
\begin{document}
\begin{abstract}
We constructed Hirota-Kimura type discretization of the classical
nonholonomic Suslov problem of motion of rigid body fixed at a
point. We found a first integral proving integrability. Also, we
have shown that discrete trajectories asymptotically tend to a
line of discrete analogies of so-called steady-state rotations.
The last property completely corresponds to well-known
property of the continuous Suslov case. The explicite formulae for
solutions are given. In $n$-dimensional case we give discrete
equations.
\end{abstract}
\maketitle \tableofcontents
\section{Introduction}

There are several methods for constructing discrete counterparts of integrable dynamical systems in classical mechanics.
One is well-known Veselov-Moser discretization (see \cite{MV}). By this method, based on discrete variational principle,
many of discrete integrable systems are found (see \cite{Sur}). In  cases with discrete Lagrangian,
corresponding  discrete map is Poisson with
respect to certain Poisson structure. Usually the discrete map is multi-valued. Recently, Hirota and Kimura
 constructed explicit integrable discretizations of the Euler and the Lagrange cases of motion of a heavy rigid
body fixed at a point using Hirota's bilinear method \cite{HK1,HK2}. They found first integrals of motion and
they solved equations in terms of elliptic functions. Suris and Petrera in \cite{PS} found bi-Hamiltonian structure for
discrete Euler top.

Our goal is to apply Hirota-Kimura method to the classical nonholonomic Suslov problem. Suslov in \cite{Sus} considered motion of
a rigid body fixed at a point with projection of angular velocity to an axis fixed in the body equal to zero. Solutions
in terms of trigonometric and exponential functions are given as well in \cite{Sus}. There are various generalizations of
classical Suslov problem: integrable potential perturbations (\cite{K,DGJ}) and higher-dimensional generalizations(\cite{FK, J, J1, ZB}).
In \cite{FZ} Fedorov and Zenkov presented certain discretization of Suslov problem based on the Veselov-Moser discretization and its
extension to nonholonomic cases suggested in \cite{CM, LMS}. Obtained map is multi-valued. Discrete trajectories asymptotically tend to
discrete analogies of so-called steady-state rotations, the property characteristic  for continuous case considered by Suslov.
Moreover, Fedorov and Zenkov \cite{FZ} gave the equations in
$n$-dimensional case and proved similar asymptotic behavior as in three-dimensional case.

In the present paper we use the Hirota-Kimura method
and get explicit discretization of the reduced Suslov problem on the linear subspace of algebra $so(3)$.
The corresponding map is not multi-valued. We gave one first integral which appears to be enough for integration.
Using a linear change of variables, we have found explicite solutions
in terms of exponentional functions. (Notice, that in
the case of discretization given in \cite{FZ}, the explicite solutions
are not known.)
Presented discrete version of Suslov case has similar asymptotic behavior as the continuous one.

The paper is organized as follows. In Section 2 basic facts about the classical Suslov problem are given. In Section 3 we present
discrete equations of Hirota-Kimura type and we construct first integral of motion. Integration procedure for discrete equations
is performed in Section 4. In Section 5 we give the discrete equations for $n$-dimensional Suslov case.

\section{A brief account of the classical Suslov problem}

The classical nonholonomic Suslov problem is defined
in \cite{Sus}.  Configuration space is Lie group
$SO(3)$. In a bases fixed in the body, the equations of the motion are:
\begin{equation}
\begin{aligned}
\dot{\bold M}={\bold M}\times{\bold \Omega}+\lambda {\bold a}\\
\langle {\bold a}, {\bold \Omega} \rangle=0.
\end{aligned}
\label{s}
\end{equation}
Here ${\bold \Omega}$ is the angular velocity, and ${\bold M}=I{\bold \Omega}$ is the angular momentum,
 $I$ is the inertia operator, ${\bold a}$ is a unit vector fixed in the body and $\lambda$ is the Lagrange multiplier.
In a bases chosen such that ${\bold a}=(0,0,1)$ and
$$
I=\left[ \begin{matrix} I_1&0&I_{13}\\
                0& I_2&I_{23}\\
                I_{13}&I_{23}&I_3
\end{matrix}
\right],
$$
the equations \eqref{s} become:
\begin{equation}
\begin{aligned}
I_1\dot \Omega_1&=-I_{13}\Omega_1\Omega_2-I_{23}\Omega_2^2\\
I_2\dot \Omega_2&=I_{13}\Omega_1^2+I_{23}\Omega_1\Omega_2\\
0&=-I_{13}\dot\Omega_1-I_{23}\dot\Omega_2+(I_1-I_2)\Omega_1\Omega_2+\lambda\\
\Omega_3&=0.
\end{aligned}
\label {s1}
\end{equation}

The first two equations are closed in $\Omega_1$ and $\Omega_2$. After solving them one finds
the Lagrange multiplier $\lambda$ as a function of time from the
third equation. Hence, for complete integrability by quadratures, only one first integral of motion is necessary.
This is the energy integral as follows from
\eqref{s1} easily. Suslov in \cite{Sus} gave solutions of the system in terms of trigonometric and
exponential functions.
He observed a remarkable fact (as the referee observed to be known before to Walker and Routh):
the motion of the body asymptotically tends to a line of rotations
with constant angular velocities which satisfy $I_{13}\Omega_1+I_{23}\Omega_2=0$.

\section{Hirota-Kimura type discretization of the Suslov problem}

In the spirit of Hirota-Kimura, a discrete counterpart of the first two equations and nonholonomic constraint of \eqref{s} is:
\begin{equation}
\begin{aligned}
I_1&(\widetilde{\Omega}_1-\Omega_1)+I_{13}(\widetilde{\Omega}_3-\Omega_3)=\epsilon\Big[\frac{I_2}{2}(\widetilde{\Omega}_2\Omega_3+\Omega_2\widetilde{\Omega}_3)\\
&+I_{23}\Omega_3\widetilde{\Omega}_3-
\frac{I_3}{2}(\widetilde{\Omega}_2\Omega_3+\Omega_2\widetilde{\Omega}_3)-\frac{I_{13}}{2}(\widetilde{\Omega}_1\Omega_2+\Omega_1\widetilde{\Omega}_2)-
I_{23}\Omega_2\widetilde{\Omega}_2\Big]\\
I_2&(\widetilde{\Omega}_2-\Omega_2)+I_{23}(\widetilde{\Omega}_3-\Omega_3)=\epsilon\Big[\frac{I_3}{2}(\widetilde{\Omega}_1\Omega_3+\Omega_1\widetilde{\Omega}_3)\\
&-I_{13}\Omega_3\widetilde{\Omega}_3-
\frac{I_1}{2}(\widetilde{\Omega}_1\Omega_3+\Omega_1\widetilde{\Omega}_3)+\frac{I_{23}}{2}(\widetilde{\Omega}_1\Omega_2+\Omega_1\widetilde{\Omega}_2)+
I_{13}\Omega_1\widetilde{\Omega}_1\Big]\\
\widetilde{\Omega}_3&=-\Omega_3.
\end{aligned}
\label{ds}
\end{equation}
Here $\Omega_i=\Omega_i(t)$, $\widetilde\Omega_i=\Omega_i(t+\epsilon)$ and $\epsilon$ is the time step.
The limit when $\epsilon$ goes to $0$ should reconstruct equations \eqref{ds} of the continuous Suslov problem \eqref{s1}.
Thus one concludes that $\Omega_3$ should be equal to zero, and equations \eqref{ds} become:
\begin{equation}
\begin{aligned}
I_1(\widetilde{\Omega}_1-\Omega_1)&=\epsilon\Big[-\frac{I_{13}}{2}(\widetilde{\Omega}_1\Omega_2+\Omega_1\widetilde{\Omega}_2)-
I_{23}\Omega_2\widetilde{\Omega}_2\Big]\\
I_2(\widetilde{\Omega}_2-\Omega_2)&=\epsilon\Big[\frac{I_{23}}{2}(\widetilde{\Omega}_1\Omega_2+\Omega_1\widetilde{\Omega}_2)+
I_{13}\Omega_1\widetilde{\Omega}_1\Big]\\
\Omega_3&=0.
\end{aligned}
\label{ds1}
\end{equation}
Since these equations are linear in $\widetilde{\Omega}_i$, the map defined by \eqref{ds1} is
explicit and unique-valued:
$$
\left[\begin{matrix} \widetilde{\Omega}_1\\
                    \widetilde{\Omega}_2
\end{matrix}\right]=
\left[\begin{matrix}1+\frac{\epsilon I_{13}}{2I_1}\Omega_2& \frac{\epsilon I_{13}}{2I_1}\Omega_1+\frac{\epsilon I_{23}}{I_1}\Omega_2\\
-\frac{\epsilon I_{13}}{I_2}\Omega_1-\frac{\epsilon I_{23}}{2I_2}\Omega_2&1-\frac{\epsilon I_{23}}{2I_2}\Omega_1
\end{matrix}\right]^{-1}
\left[\begin{matrix}\Omega_1\\ \Omega_2
\end{matrix}\right],
$$
giving
\begin{equation}
\begin{aligned}
\widetilde{\Omega}_1&=\frac{1}{\Delta}(\Omega_1-\frac{\epsilon I_{23}}{2I_2}\Omega_1^2-
\frac{\epsilon I_{13}}{2I_1}\Omega_1\Omega_2-\frac{\epsilon I_{23}}{I_1}\Omega_2^2)\\
\widetilde{\Omega}_2&=\frac{1}{\Delta}(\Omega_2+\frac{\epsilon I_{13}}{2I_1}\Omega_2^2+
\frac{\epsilon I_{23}}{2I_2}\Omega_1\Omega_2+\frac{\epsilon I_{13}}{I_2}\Omega_1^2),
\end{aligned}
\label{ds2}
\end{equation}
where
$$
\Delta=\left(1+\frac{\epsilon I_{13}}{2I_1}\Omega_2\right)\left(1-\frac{\epsilon I_{23}}{2I_2}\Omega_1\right)+
\frac{\epsilon^2}{I_1I_2}\left(\frac{I_{13}\Omega_1}{2}+I_{23}\Omega_2\right)\left(I_{13}\Omega_1+\frac{I_{23}\Omega_2}{2}\right).
$$
As we have already mentioned, in continuous case equations \eqref{s1} have the
energy integral. But for considered discretization the energy is not an integral anymore, but there exists a first integral
as it can be seen from the following statement.
\begin{Lemma}{\label{l1}} The function
\begin{equation}
F=\frac{I_1\Omega_1^2+I_2\Omega_2^2}{4I_1I_2+\epsilon^2(I_{13}\Omega_1+I_{23}\Omega_2)^2}
\label{i1}
\end{equation}
is a first integral of equations \eqref{ds2}
\end{Lemma}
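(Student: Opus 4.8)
The plan is to verify directly that $F$ is invariant under the map \eqref{ds2}, i.e. that $F(\widetilde\Omega_1,\widetilde\Omega_2)=F(\Omega_1,\Omega_2)$. Writing the numerator as $N=I_1\Omega_1^2+I_2\Omega_2^2$ and the denominator as $D=4I_1I_2+\epsilon^2 s^2$ with $s=I_{13}\Omega_1+I_{23}\Omega_2$, and abbreviating $\widetilde\Omega_i=P_i/\Delta$ for the quadratic numerators $P_i$ appearing in \eqref{ds2}, invariance is equivalent to the polynomial identity $N(\widetilde\Omega)\,D(\Omega)=N(\Omega)\,D(\widetilde\Omega)$ in $\Omega_1,\Omega_2$ (and the inertia constants). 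The strategy is to isolate a common scalar factor so that numerator and denominator transform by the same multiplier, which makes the conserved structure transparent and localizes the computation.

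First I would simplify $\Delta$. Expanding the definition and collecting the $\epsilon^2$ terms, the quadratic part assembles into a perfect square, giving the compact form $\Delta = 1 + u + v$, where $u=\tfrac{\epsilon}{2}\big(\tfrac{I_{13}\Omega_2}{I_1}-\tfrac{I_{23}\Omega_1}{I_2}\big)$ and $v=\tfrac{\epsilon^2 s^2}{2I_1I_2}$; in particular $\epsilon^2 s^2 = 2I_1I_2\,v$. Next I would compute the image of the linear form $s$: a short calculation with \eqref{ds2} gives $I_{13}P_1+I_{23}P_2 = s(1-u)$, hence the transformed form is $\widetilde s = I_{13}\widetilde\Omega_1+I_{23}\widetilde\Omega_2 = s(1-u)/\Delta$.

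With these two facts the denominator transforms transparently. Using $\widetilde s=s(1-u)/\Delta$ and $\epsilon^2 s^2=2I_1I_2 v$ one gets $D(\widetilde\Omega)=4I_1I_2+\epsilon^2\widetilde s^{\,2}=\big(4I_1I_2\Delta^2+\epsilon^2 s^2(1-u)^2\big)/\Delta^2$, and the elementary identity $4I_1I_2\Delta^2+\epsilon^2 s^2(1-u)^2=(4I_1I_2+\epsilon^2 s^2)\big[(1+u)^2+2v\big]$ (immediate from $\Delta=1+u+v$) yields $D(\widetilde\Omega)=D(\Omega)\cdot g$ with the scalar factor $g=\big[(1+u)^2+2v\big]/\Delta^2$.

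It then remains to prove the companion identity for the numerator, namely $I_1P_1^2+I_2P_2^2 = N(\Omega)\big[(1+u)^2+2v\big]$, equivalently $N(\widetilde\Omega)=N(\Omega)\cdot g$ with the same factor $g$. Once this is in hand the Lemma follows at once, since $F(\widetilde\Omega)=N(\widetilde\Omega)/D(\widetilde\Omega)=N(\Omega)g/\big(D(\Omega)g\big)=F(\Omega)$. This last identity is the main obstacle: it is a genuine degree-four polynomial identity in $\Omega_1,\Omega_2$, and unlike the denominator it does not reduce to the single variable $s$, so I expect to verify it by substituting the explicit $P_1,P_2$ and matching coefficients (a check at order $\epsilon$ already confirms the leading nontrivial terms). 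If preferred, one can bypass the factor $g$ and expand the single degree-six identity $N(\widetilde\Omega)D(\Omega)=N(\Omega)D(\widetilde\Omega)$ directly, but the factorized route is cleaner.
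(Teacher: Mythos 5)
Your proposal is correct, and since the paper's entire proof is the single sentence ``Proof follows by direct calculations,'' your argument is essentially the same route --- a direct verification of $F(\widetilde\Omega)=F(\Omega)$ --- just organized so that numerator and denominator visibly pick up a common factor. All of your intermediate claims check out: the $\epsilon^2$ part of $\Delta$ does assemble into $\tfrac{\epsilon^2 s^2}{2I_1I_2}$, so $\Delta=1+u+v$; a short computation confirms $I_{13}P_1+I_{23}P_2=s(1-u)$, hence $\widetilde s=s(1-u)/\Delta$; and your denominator identity, after substituting $\epsilon^2s^2=2I_1I_2v$ and dividing by $v$, reduces to $4(1+u)+(1-u)^2=4+(1+u)^2$, which is trivially true.

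The one step you defer --- $I_1P_1^2+I_2P_2^2=N\bigl[(1+u)^2+2v\bigr]$ --- is indeed true, but you do not need brute-force coefficient matching: the quadratic numerators of \eqref{ds2} regroup as
\begin{equation*}
P_1=\Omega_1(1+u)-\frac{\epsilon s}{I_1}\,\Omega_2,\qquad
P_2=\Omega_2(1+u)+\frac{\epsilon s}{I_2}\,\Omega_1,
\end{equation*}
which one checks in one line by expanding $u$ and $s$ (this is the discrete counterpart of writing the continuous equations as $I_1\dot\Omega_1=-s\Omega_2$, $I_2\dot\Omega_2=s\Omega_1$). Then
\begin{equation*}
I_1P_1^2+I_2P_2^2=\bigl(I_1\Omega_1^2+I_2\Omega_2^2\bigr)(1+u)^2
+\epsilon^2s^2\Bigl(\frac{\Omega_2^2}{I_1}+\frac{\Omega_1^2}{I_2}\Bigr),
\end{equation*}
because the cross terms $\mp\,2\epsilon s\,\Omega_1\Omega_2(1+u)$ cancel, and the last term equals $\tfrac{\epsilon^2s^2}{I_1I_2}N=2vN$. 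This gives exactly $N\bigl[(1+u)^2+2v\bigr]$, so both $N$ and $D$ transform by your factor $g=[(1+u)^2+2v]/\Delta^2$ and the Lemma follows. With this two-line completion your plan is a full proof, and arguably a more informative one than the paper's, since it also exhibits the transformation law of the form $s=I_{13}\Omega_1+I_{23}\Omega_2$ that governs the asymptotics discussed later in the paper.
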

\begin{proof} Proof follows by direct calculations.
\end{proof}
In the limit when $\epsilon$ goes to zero, integral \eqref{i1} tends to the energy integral divided by constant.

\section{Integration}

In order to integrate discrete Suslov equations, we introduce new coordinates:
\begin{equation}
x=I_{13}\Omega_1+I_{23}\Omega_2,\ \ \ y=I_{23}I_{1}\Omega_1-I_{13}I_2\Omega_2.
\label{xy}
\end{equation}
The Jacobian of the change of coordinates \eqref{xy} is:
$$
-(I_{13}^2I_2+I_{23}^2I_1).
$$
Thus, it is equal to zero only in the case $I_{13}=I_{23}=0$ when
$\widetilde{\Omega}_1=\Omega_1$ and $\widetilde{\Omega}_2=\Omega_2$, giving equilibrium position.

In the new coordinates \eqref{xy} equations \eqref{ds1} are:
\begin{equation}
\begin{aligned}
\widetilde x-x&= \frac{\epsilon}{2I_1I_2}(\widetilde xy+x\widetilde y)\\
\widetilde y-y&= -\epsilon \widetilde xx.
\end{aligned}
\label{xy1}
\end{equation}
The first integral \eqref{i1} becomes
\begin{equation}
F=\frac{I_1I_2x^2+y^2}{4I_1I_2+\epsilon^2x^2}.
\label{i2}
\end{equation}
The curve $F(x,y)=h$ can be parameterized by introducing:
\begin{equation}
\begin{aligned}
x&=2\sqrt{\frac{I_1I_2h}{I_1I_2-h\epsilon\cos^2\phi}}\cos\phi,\\
y&=2I_1I_2\sqrt{\frac{h}{I_1I_2-h\epsilon\cos^2\phi}}\sin\phi.
\end{aligned}
\label{phi}
\end{equation}
Putting \eqref{phi} into the second equation of \eqref{xy1} and denoting
$$
u=\sqrt{\frac{I_1I_2}{I_1I_2-h\epsilon^2}}\tan\phi
$$
we get:
\begin{equation}
u\sqrt{\widetilde{u}^2+1}-\widetilde{u}\sqrt{u^2+1}=\frac{2\epsilon\sqrt{I_1I_2h}}{I_1I_2-h\epsilon^2}.
\label{u}
\end{equation}
Let us suppose the form of a solution of \eqref{u}: $u(n)=\sh(k_1(t_0+n\epsilon)+k_2)$, where $k_1$ and $k_2$ are constants. By plugging
the form into $\eqref{u}$, one gets:
\begin{equation}
\sh(-k_1\epsilon)=\frac{2\epsilon\sqrt{I_1I_2h}}{I_1I_2-h\epsilon^2}.
\label{k1}
\end{equation}
So, we have
\begin{Proposition}\label{t1} Let constant $k_1$ satisfy \eqref{k1} and $k_2$ be arbitrary
constant. Then the function $u(n)=\sh(k_1(t_0+n\epsilon)+k_2)$ gives solutions of equation \eqref{u}.
\end{Proposition}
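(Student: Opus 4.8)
The plan is to substitute the proposed ansatz directly into \eqref{u} and show that its left-hand side collapses to a constant that matches the right-hand side by virtue of the defining relation \eqref{k1}. First I would introduce the shorthand $\theta=k_1(t_0+n\epsilon)+k_2$, so that $u=\sh\theta$; since advancing $n$ by one shifts the argument by exactly $k_1\epsilon$, we have $\widetilde u=\sh(\theta+k_1\epsilon)$.

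The central observation is the identity $\sqrt{\sh^2\theta+1}=\ch\theta$, where the positive root is taken throughout (legitimate because $\ch$ is everywhere positive). Applying this to both radicals in \eqref{u} rewrites the left-hand side as
$$
\sh\theta\,\ch(\theta+k_1\epsilon)-\sh(\theta+k_1\epsilon)\,\ch\theta,
$$
which is precisely $\sh\bigl(\theta-(\theta+k_1\epsilon)\bigr)=\sh(-k_1\epsilon)$ by the subtraction formula for the hyperbolic sine. Hence the left-hand side of \eqref{u} is independent of $n$ and equal to the constant $\sh(-k_1\epsilon)$.

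It then remains only to invoke \eqref{k1}, which states exactly that $\sh(-k_1\epsilon)$ equals $\tfrac{2\epsilon\sqrt{I_1I_2h}}{I_1I_2-h\epsilon^2}$, the right-hand side of \eqref{u}. Thus the equation holds identically in $n$ and for every choice of the free constant $k_2$. There is no genuine obstacle here: the computation is a one-line application of the hyperbolic addition law, and the only point demanding a word of care is the consistent choice of the positive branch of the square root, which is automatic since $\ch$ never vanishes.
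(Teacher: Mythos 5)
Your proof is correct and follows essentially the same route as the paper, which simply notes that plugging the ansatz into \eqref{u} yields the condition \eqref{k1}; you have merely filled in the details (the identity $\sqrt{\sh^2\theta+1}=\ch\theta$ and the hyperbolic subtraction formula giving $\sh(-k_1\epsilon)$ on the left-hand side). Nothing further is needed.
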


As a consequence we have the following statement:
\begin{Theorem} Let constant $k_1$ satisfy \eqref{k1} and $k_2$ be arbitrary
constant. The functions:
$$
\begin{aligned}
\Omega_{1}(n)&= \frac{2\sqrt{h}}{(I_{13}^2I_1+I_{13}^2I_{2})\ch(k_1(t_0+n\epsilon)+k_2)}\left(\frac{I_{13}I_2}{\sqrt{I_1I_2-h\epsilon^2}}+I_{23}\sh(k_1(t_0+n\epsilon)+k_2)\right)\\
\Omega_{2}(n)&= \frac{2\sqrt{h}}{(I_{13}^2I_1+I_{13}^2I_{2})\ch(k_1(t_0+n\epsilon)+k_2)}\left(\frac{I_{23}I_1}{\sqrt{I_1I_2-h\epsilon^2}}-I_{13}\sh(k_1(t_0+n\epsilon)+k_2)\right)\\
\end{aligned}
$$
give the solutions of equations \eqref{ds2}.
\end{Theorem}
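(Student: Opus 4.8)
The plan is to obtain the theorem by reversing, step by step, the chain of substitutions that led from the discrete system to the scalar equation \eqref{u}, starting from the explicit solution $u(n)=\sh(k_1(t_0+n\epsilon)+k_2)$ furnished by Proposition \ref{t1}. Writing $\theta_n=k_1(t_0+n\epsilon)+k_2$, I would first recover $\phi$ from $u$ through the relation $u=\sqrt{I_1I_2/(I_1I_2-h\epsilon^2)}\,\tan\phi$, then recover the pair $(x,y)$ from \eqref{phi}, and finally recover $(\Omega_1,\Omega_2)$ by inverting the linear change of variables \eqref{xy}. Since the Jacobian $-(I_{13}^2I_2+I_{23}^2I_1)$ is nonzero away from the trivial equilibrium, this inversion is legitimate and expresses $\Omega_1,\Omega_2$ as explicit linear combinations of $x$ and $y$.

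The computational heart is the simplification of $(x,y)$ in terms of the hyperbolic functions of $\theta_n$. From $\tan\phi=\sqrt{(I_1I_2-h\epsilon^2)/I_1I_2}\,\sh\theta_n$ one obtains $\cos^2\phi$ and $\sin^2\phi$ as rational functions of $\sh^2\theta_n$; the key cancellation is that the factor $I_1I_2-h\epsilon^2\cos^2\phi$ under the square root in \eqref{phi} combines with $\sec^2\phi=1+\tan^2\phi$ and collapses, via $1+\sh^2\theta_n=\ch^2\theta_n$, into a clean multiple of $\ch^2\theta_n$. This yields $x(n)$ proportional to $1/\ch\theta_n$ and $y(n)$ proportional to $\sh\theta_n/\ch\theta_n$, after which applying the inverse of the matrix in \eqref{xy} reproduces the stated formulas for $\Omega_1(n),\Omega_2(n)$ up to the routine bookkeeping of the overall constants. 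I expect this algebraic reduction, with its nested radicals, to be the most error-prone but ultimately mechanical part.

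It then remains to verify that the constructed sequence genuinely solves \eqref{ds2}. Since \eqref{ds2} is equivalent to \eqref{ds1}, which in the coordinates \eqref{xy} is exactly \eqref{xy1}, it suffices to check the two equations of \eqref{xy1}. After the substitution \eqref{phi} the second equation is precisely \eqref{u}, which holds by Proposition \ref{t1} together with \eqref{k1}. The first equation is the genuine obstacle: the substitution imposed only the second one, so I must argue the first is not lost. The plan is to compare the constructed next point with the true iterate of the map \eqref{xy1}. By construction every $(x(n),y(n))$ lies on the level curve $F=h$, and by Lemma \ref{l1} so does the true image $\mathrm{map}(x(n),y(n))$; moreover both the true image and the constructed point $(x(n+1),y(n+1))$ satisfy the second equation relative to $(x(n),y(n))$. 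Viewing the left-hand side of \eqref{u} as a function of $\widetilde u$, a short monotonicity check shows it is strictly decreasing, so \eqref{u} has at most one solution $\widetilde u$ for given $u$; hence the constructed point coincides with the true iterate and therefore inherits the first equation of \eqref{xy1} as well. This, together with the arbitrariness of $k_2$ (which only fixes the initial phase), completes the proof.
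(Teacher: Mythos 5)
Your route is the paper's route: the printed proof consists, in its entirety, of your first two paragraphs --- from Proposition \ref{t1} and \eqref{phi} one writes
$x(n)=2\sqrt{hI_1I_2/(I_1I_2-h\epsilon^2)}\,\big/\ch\theta_n$ and $y(n)=2\sqrt{hI_1I_2}\,\sh\theta_n/\ch\theta_n$, with $\theta_n=k_1(t_0+n\epsilon)+k_2$, and then inverts the linear change \eqref{xy}. Where you go beyond the paper is in noticing that this chain enforces only the second equation of \eqref{xy1} (together with membership in the level set $F=h$), so the first equation of \eqref{xy1} still needs to be accounted for; the paper is completely silent on this point, and your third paragraph is the right instinct, supplying rigor the printed proof lacks.

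As stated, however, your closing uniqueness argument has a hole. The monotonicity claim is correct: $\widetilde u\mapsto u\sqrt{\widetilde u^2+1}-\widetilde u\sqrt{u^2+1}$ is strictly decreasing, so \eqref{u} has at most one solution $\widetilde u$ for a given $u$. But a value of $u$ does not determine a point of the level curve: $u\propto\tan\phi$ is invariant under $\phi\mapsto\phi+\pi$, so each $u$ corresponds to \emph{two} antipodal points $\pm(x,y)$ of the ellipse $(I_1I_2-h\epsilon^2)x^2+y^2=4hI_1I_2$. Equivalently, the line $\widetilde y=y-\epsilon x\widetilde x$ meets this ellipse in (generically) two points, and your argument controls only the one with $\widetilde x>0$: on the opposite branch the second equation of \eqref{xy1} translates not into \eqref{u} but into its sign-flipped version, so the true iterate could a priori be that other intersection point, and then the constructed sequence would fail the first equation. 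The gap closes with one extra observation: eliminating $\widetilde y$ from \eqref{xy1} gives
\begin{equation*}
\widetilde x \;=\; x\,\frac{2I_1I_2+\epsilon y}{\,2I_1I_2-\epsilon y+\epsilon^2x^2\,},
\end{equation*}
and on the level curve $|y|\le 2\sqrt{hI_1I_2}<2I_1I_2/\epsilon$ (using the standing assumption $h\epsilon^2<I_1I_2$), so numerator and denominator are both positive and the true map preserves the sign of $x$; hence the true iterate lies on the branch $\widetilde x>0$, where your monotonicity argument does identify it with the constructed point. Alternatively, you can bypass the comparison argument entirely and verify the first equation of \eqref{xy1} directly: with $x(n),y(n)$ as above, the hyperbolic addition formulas reduce it to $\tanh(k_1\epsilon/2)=-\epsilon\sqrt{h/(I_1I_2)}$, which is equivalent to \eqref{k1}.
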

\begin{proof} From Proposition {\ref{t1}} and \eqref{phi} one gets:
$$
x(n)=2\sqrt{\frac{hI_1I_2}{I_1I_2-h\epsilon^2}}\frac{1}{\ch(k_1(t_0+n\epsilon)+k_2)},\,\, y(n)=2\sqrt{hI_1I_2}\frac{\sh(k_1(t_0+n\epsilon)+k_2)}{\ch(k_1(t_0+n\epsilon)+k_2)}.
$$
Proof follows from \eqref{xy}.
\end{proof}
\begin{Proposition}
 The discrete trajectories of motion of the body asymptotically tend to a line of discrete analogies of so-called steady-state
 rotations that satisfy:
$$
I_{13}\Omega_1+I_{23}\Omega_2=0.
$$
\end{Proposition}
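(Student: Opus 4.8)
The plan is to read off the asymptotic behaviour directly from the closed-form solutions obtained in the Theorem. Recall that in the coordinates \eqref{xy} the solution is
$$
x(n)=2\sqrt{\frac{hI_1I_2}{I_1I_2-h\epsilon^2}}\,\frac{1}{\ch(k_1(t_0+n\epsilon)+k_2)},\qquad
y(n)=2\sqrt{hI_1I_2}\,\frac{\sh(k_1(t_0+n\epsilon)+k_2)}{\ch(k_1(t_0+n\epsilon)+k_2)},
$$
and that by definition $x=I_{13}\Omega_1+I_{23}\Omega_2$. Thus the steady-state condition $I_{13}\Omega_1+I_{23}\Omega_2=0$ is precisely the locus $x=0$, and it suffices to show that $x(n)\to 0$ while $y(n)$ stays bounded away from the origin as $n\to\infty$.

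First I would note that $k_1\neq 0$: by \eqref{k1} the quantity $\sh(-k_1\epsilon)$ equals $\frac{2\epsilon\sqrt{I_1I_2h}}{I_1I_2-h\epsilon^2}$, which is nonzero for $h\neq 0$ and $\epsilon>0$ small enough that $I_1I_2-h\epsilon^2>0$. Hence the argument $k_1(t_0+n\epsilon)+k_2$ tends to $\pm\infty$ as $n\to\infty$, and since $\ch$ is even and grows without bound, the denominator in $x(n)$ diverges while its numerator is a fixed constant. Therefore $x(n)\to 0$, i.e. $I_{13}\Omega_1(n)+I_{23}\Omega_2(n)\to 0$. At the same time $y(n)=2\sqrt{hI_1I_2}\,\tanh(k_1(t_0+n\epsilon)+k_2)\to\pm 2\sqrt{hI_1I_2}$, a finite nonzero limit, so the trajectory does not collapse to the equilibrium at the origin but approaches a definite point of the line $x=0$.

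It remains to check that this limiting line really consists of discrete steady-state rotations, i.e. of fixed points of the map \eqref{xy1}. Setting $x=0$ in \eqref{xy1} gives $\widetilde x=\frac{\epsilon}{2I_1I_2}\widetilde x\,y$ and $\widetilde y=y$; for the relevant values of $y$ the first relation forces $\widetilde x=0$, so $x=0$ is invariant and $y$ is preserved along it. Hence $\{x=0\}$ is exactly a line of fixed points, the discrete analogue of the steady-state rotations with $I_{13}\Omega_1+I_{23}\Omega_2=0$, and the computation above shows that every trajectory tends to one such point. I expect no genuine difficulty beyond bookkeeping; the only step requiring care is fixing the sign of $k_1$ from \eqref{k1} so as to confirm that the argument of $\ch$ and $\sh$ indeed diverges, rather than remaining bounded, as the discrete time $n$ increases.
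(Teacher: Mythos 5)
Your proposal is correct and follows essentially the same route as the paper: the paper's proof simply observes that in the explicit solution $x(n)\to 0$ as $n\to\pm\infty$, which is exactly your central step. Your additional verifications—that $k_1\neq 0$ via \eqref{k1}, that $y(n)$ tends to the finite nonzero limit $\pm 2\sqrt{hI_1I_2}$, and that the line $x=0$ consists of fixed points of \eqref{xy1}—are sound and fill in details the paper leaves implicit.
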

\begin{proof}
In the limit $n\longrightarrow \pm\infty$, we have that $x$ goes to zero giving that
$I_{13}\Omega_1+I_{23}\Omega_2$ goes to zero.
\end{proof}

The last statement is illustrated by the following pictures. In all four cases, the line $I_{13}\Omega_1+I_{23}\Omega_2=0$ is
clearly indicated. Following parameters are chosen for picture 1: $\epsilon=0.2, I_1=4, I_2=1, I_{13}=-0.5, I_{23}=-0.3$,
for picture 2: $\epsilon=0.2, I_1=4, I_2=3, I_{13}=-0.4, I_{23}=-0.2$, for picture 3: $\epsilon=0.02, I_1=4, I_2=2, I_{13}=0, I_{23}=-0.2$
and for picture 4: $\epsilon=1, I_1=3, I_2=3, I_{13}=-0.2, I_{23}=-0.2$.

{\epsfig{figure=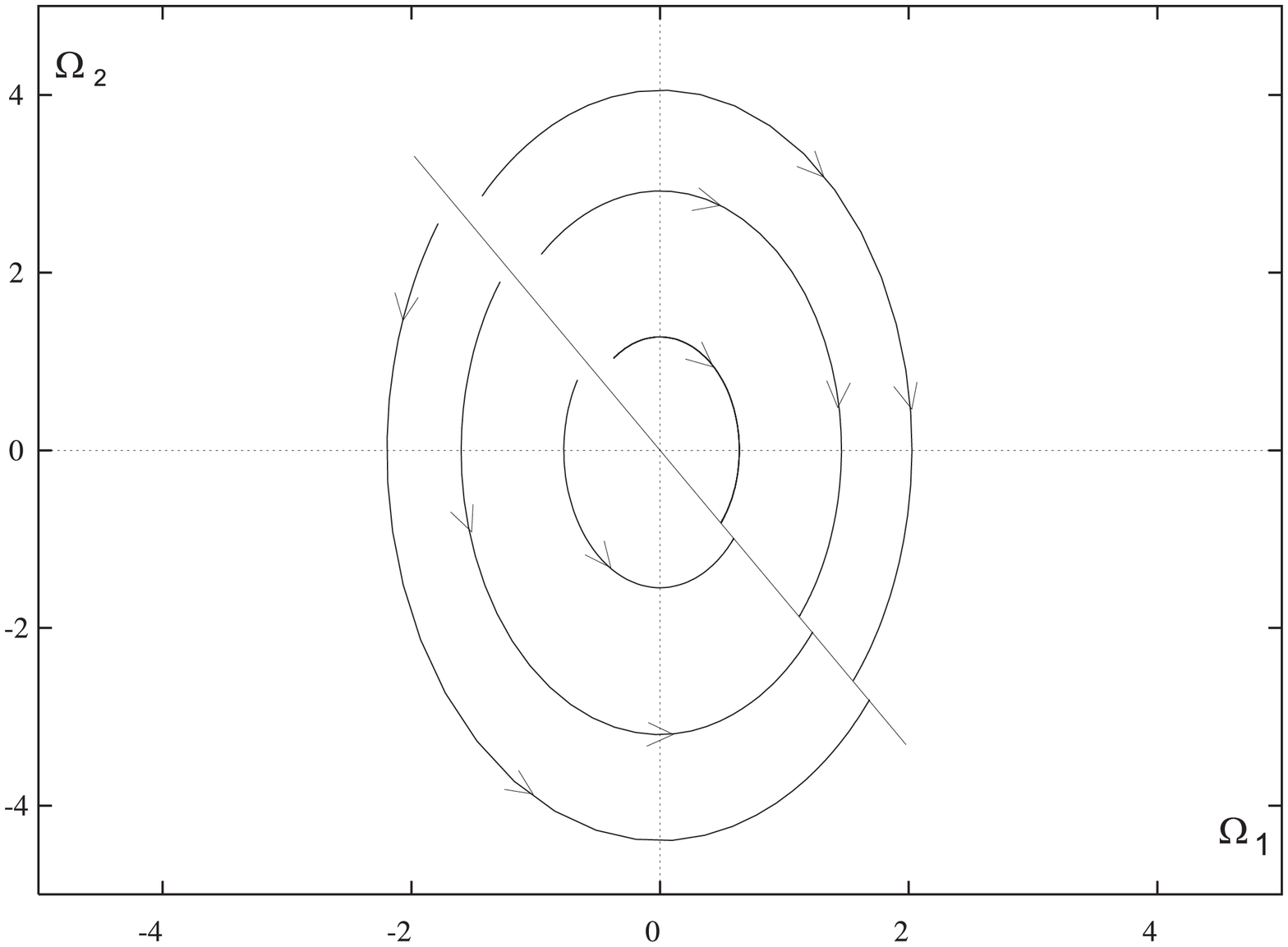, height=6.5cm}}

{\epsfig{figure=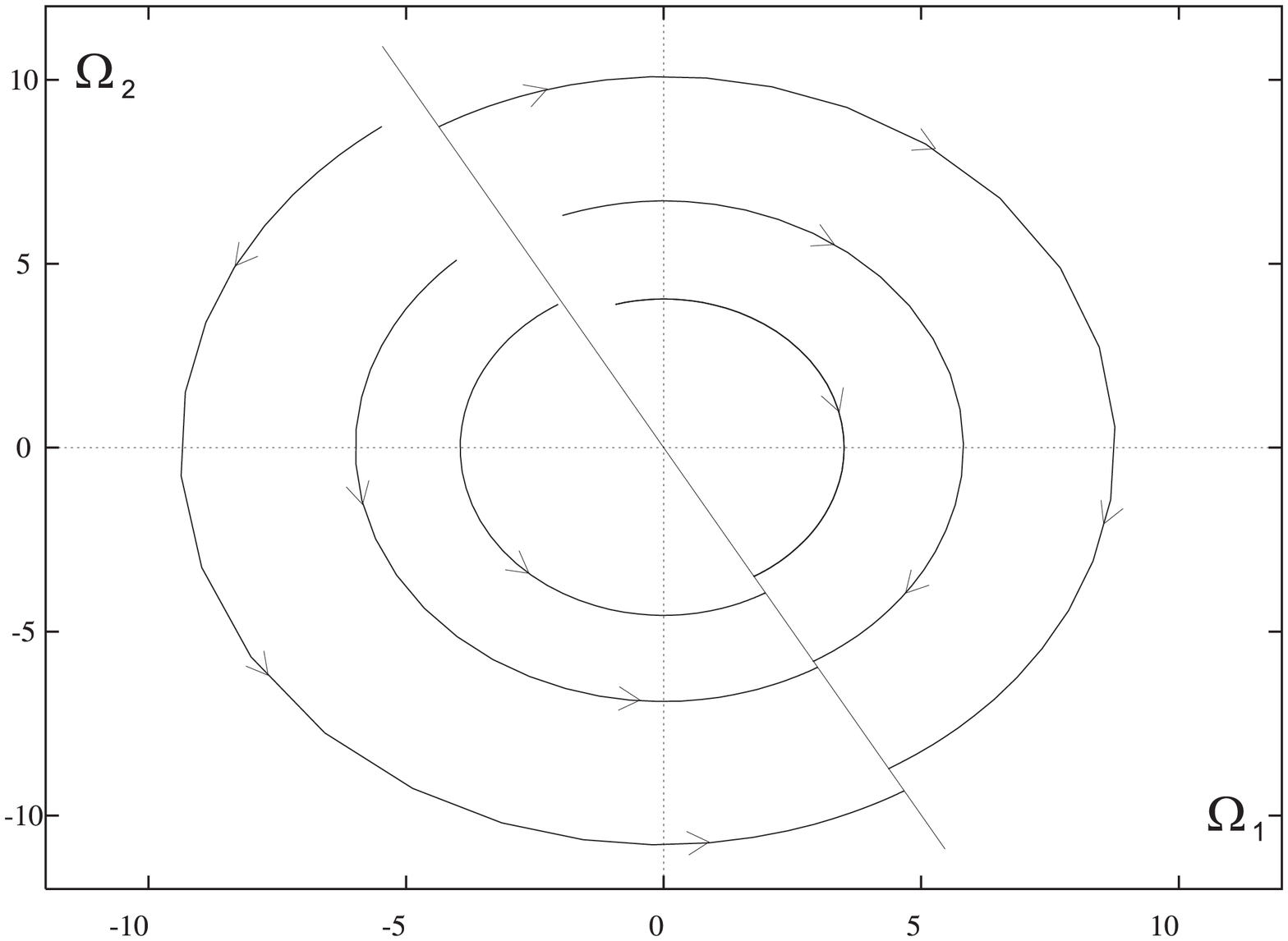, height=6.5cm}}

{\epsfig{figure=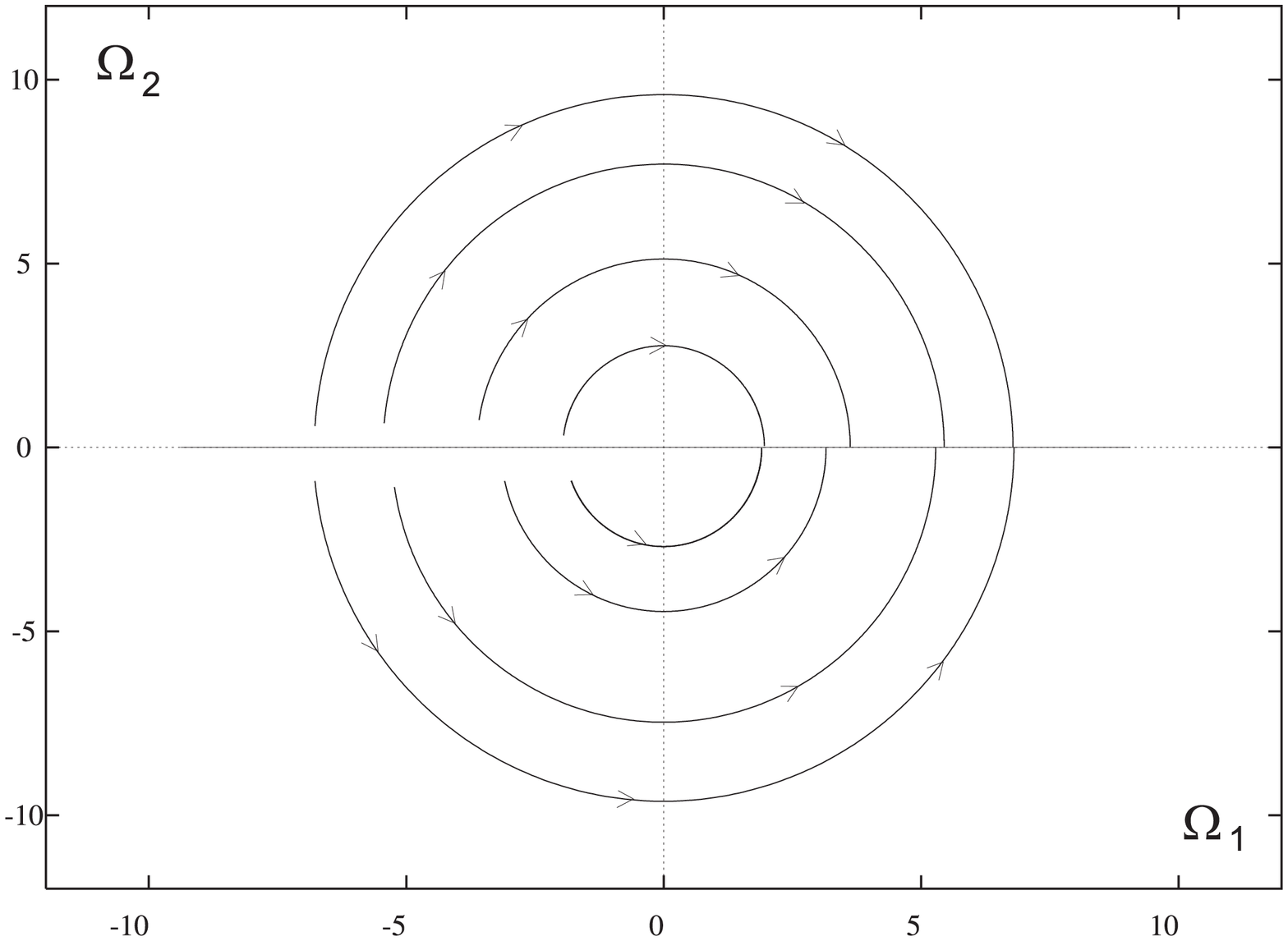, height=6.5cm}}

{\epsfig{figure=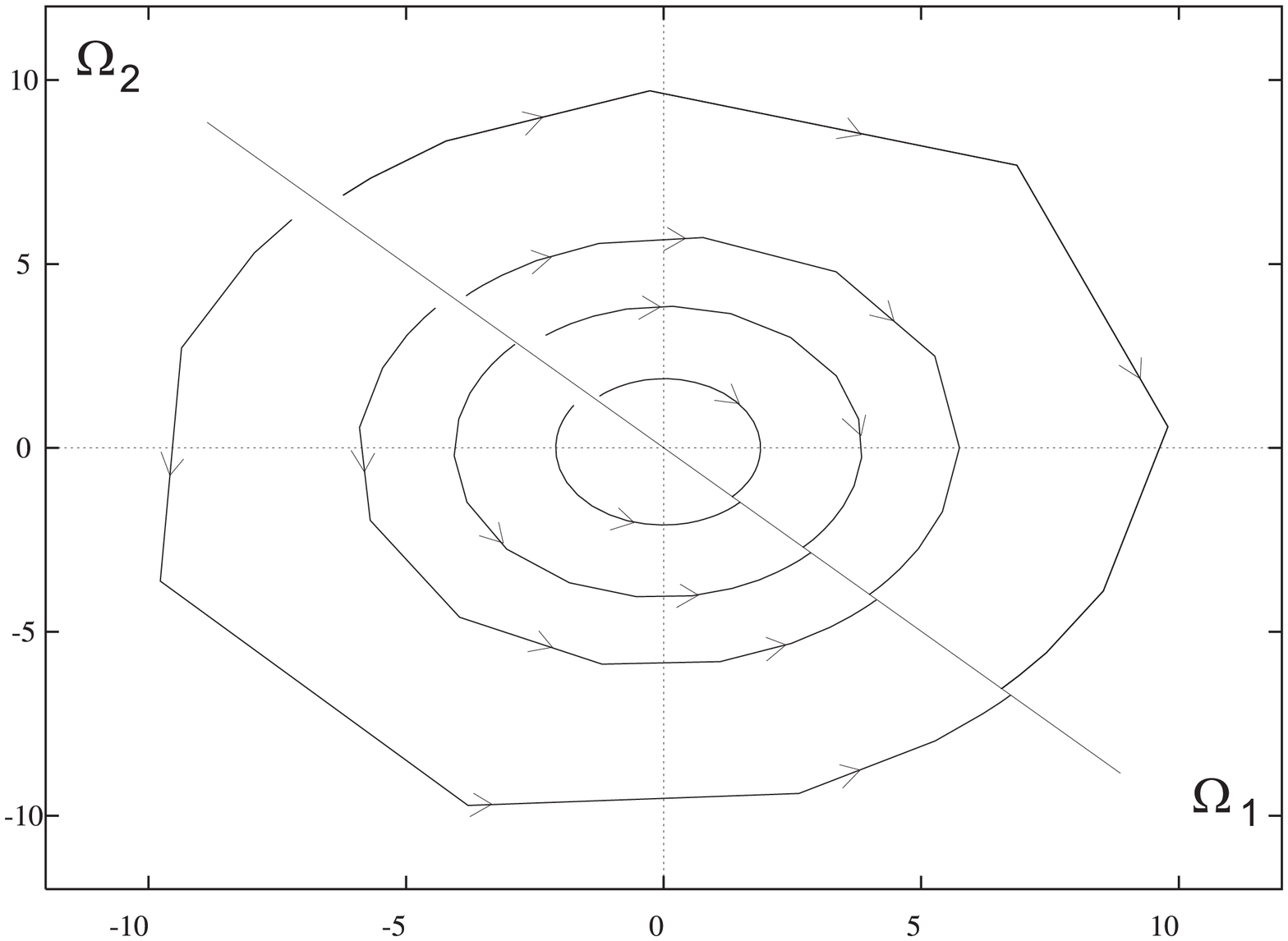, height=6.5cm}}

\begin{Remark} Equation \eqref{u} splits on two equations:
$$\widetilde{u}=-c\sqrt{u^2+1}+u\sqrt{c^2+1},\ \ \ \widetilde{u}=-c\sqrt{u^2+1}-u\sqrt{c^2+1}
$$
where $c=\frac{2\epsilon\sqrt{I_1I_2h}}{I_1I_2-h\epsilon^2}$.
The solutions given in Proposition \ref{t1} correspond to the first equation. One easily concludes that for the second
equation the limit when $\epsilon$ goes to zero is not defined well.
\end{Remark}

\begin{Remark} One can use the change of coordinates $x=I_{13}\Omega_1+I_{23}\Omega_2,
y=I_{23}I_{1}\Omega_1-I_{13}I_2\Omega_2$ also in the continuous case. Then equations \eqref{s1} become:
$$
\dot x=\frac{xy}{I_1 I_2},\ \ \dot y=-x^2.
$$
which are simpler then original ones.
One can easily see that Hirota-Kimura type discretizations of last equations are equations \eqref{xy}, as
linear change of variables commutes with Hirota-Kimura type discretization.
\end{Remark}
\section{Higher-dimensional case}

As it has already been mentioned, the higher-dimensional generalization of Suslov case was suggested by Kozlov and Fedorov (see\cite{FK, FZ}).
The configuration space is Lie group $SO(n)$ and the nonholonomic constraints are:
$$
\Omega_{ij}=0,\ \ \ 1\leq i,j\leq n-1.
$$
The equations of motion are:
\begin{equation}
\dot M=[M, \Omega] +\Lambda
\label{nsus}
\end{equation}
where $M=I\Omega+\Omega I$, and
$$
I=\left [\begin{matrix} I_{11}& 0& ...& I_{1n}\\
                        0&I_{22}&...&I_{2n}\\
                        \vdots&\vdots&...&\vdots\\
                        I_{1n}&I_{2n}&...&I_{nn}
        \end{matrix}
        \right],
        \quad
\Lambda=\left [\begin{matrix} 0& \lambda_{12}& ...&\lambda_{1,n-1}& 0\\
                        -\lambda_{12}&0&...&\lambda_{2,n-1}&0\\
                        \vdots&\vdots&...&\vdots\\
                        -\lambda_{1, n-1}&-\lambda_{2,n-1}&...&0&0\\
                        0&0&...&0&0
        \end{matrix}
        \right],
$$
From \eqref{nsus} closed systems of equations in $\Omega_{in},\ 1\leq i\leq n-1$ follows:
\begin{equation}
\begin{aligned}
(I_{ii}+I_{nn})&\dot\Omega_{in}=-I_{in}(\Omega_{1n}^2+...+\Omega_{n-1, n}^2)
&+(I_{1n}\Omega_{1n}+...+I_{n-1, n}\Omega_{n-1, n})\Omega_{in}.
\end{aligned}
\label{nsus1}
\end{equation}

Similarly as in three-dimensional case, we give the Hirota-Kimura discretization in $n$ dimensions by the following system of equations:
\begin{equation}
\begin{aligned}
(I_{ii}+I_{nn})(\widetilde{\Omega}_{in}-\Omega_{in})&=-\epsilon I_{in}(\widetilde{\Omega}_{1n}\Omega_{1n}+...+\widetilde{\Omega}_{n-1, n}\Omega_{n-1,n})\\
&+\epsilon(I_{1n}\widetilde{\Omega}_{1n}+...+I_{n-1, n}\widetilde{\Omega}_{n-1, n})\frac{\Omega_{in}}{2}\\
&+\epsilon(I_{1n}\Omega_{1n}+...+I_{n-1, n}\Omega_{n-1, n})\frac{\widetilde{\Omega}_{in}}{2}.
\end{aligned}
\label{dsn}
\end{equation}

As in three-dimensional case, map defined by \eqref{dsn} is explicit and unique-valued:
\begin{equation}
\left[\begin{matrix} \widetilde{\Omega}_{1n}\\ .\\.\\.\\\widetilde{\Omega}_{n-1,n}\end{matrix}\right]=A^{-1}
\left[\begin{matrix} \Omega_{1n}\\ .\\.\\.\\\Omega_{n-1,n}\end{matrix}\right]
\end{equation}
where
$$
A=\left[\begin{matrix} 1-\frac{\epsilon(I_{2n}\Omega_{2n}+...+I_{n-1,n}\Omega_{n-1,n})}{2(I_{11}+I_{nn})}&
...&\frac{\epsilon(2 I_{1n}\Omega_{n-1,n}-I_{n-1,n}\Omega_{1n})}{2(I_{11}+I_{nn})}\\
\frac{\epsilon(2 I_{2n}\Omega_{1n}-I_{1,n}\Omega_{2n})}{2(I_{22}+I_{nn})}&...&\frac{\epsilon (2 I_{2n}\Omega_{n-1,n}-I_{n-1,n}\Omega_{2n})}{2(I_{22}+I_{nn})}\\
\vdots\\
\frac{\epsilon(2I_{n-1,n}\Omega_{1n}-I_{1n}\Omega_{n-1,n})}{2(I_{n-1,n-1}+I_{nn})}&
...&1-\frac{\epsilon(I_{1n}\Omega_{1n}+...+I_{n-2,n}\Omega_{n-2,n})}{2(I_{n-1,n-1}+I_{nn})}
\end{matrix}
\right]
$$
Let us mention that equations \eqref{dsn} have a particular solution $\frac{\Omega_{1n}}{I_{1n}}=...=\frac{\Omega_{n-1,n}}{I_{n-1,n}}=c=const$, which
corresponds to motion with constant angular velocity.

\section*{Acknowledgments}

Our research is supported by the Project 144014 of the Ministry of Science of Republic of Serbia. We use the opportunity to thank Prof. Yu. N. Fedorov
for useful discussions on background of the Suslov problem and its discretization from \cite{FZ}.

\smallskip
\noindent
Vladimir Dragovi\'c,\\
Mathematical Institute SANU\\
Kneza Mihaila 36, Belgrade\\
Serbia,\\
Grupo de Fisica Matematica,\\
Complexo Interdisciplinar da Universidade de Lisboa\\
Av. Prof. Gama Pinto, 2\\
PT-1649-003 Lisboa\\
Portugal\\

\noindent Borislav Gaji\' c\\
Mathematical Institute SANU\\
Kneza Mihaila 36, Belgrade\\
Serbia\\

\begin{thebibliography}{99}

\bibitem{CM} J. Cort\' es, S. Mart\' inez: Nonholonomic integrators, {\it Nonlinearity}, {\bf 14}, (2001), 1365-1392.

\bibitem{DGJ} V. Dragovi\' c, B. Gaji\' c, B. Jovanovi\' c: Generalizations of
classical integrable nonholonomic rigid body motion, {\it J. Phys. A: Math. Gen.}, {\bf 31}, (1998), 9861-9869

\bibitem{FK} Yu. N. Fedorov, V. V. Kozlov: Various Aspects of $n$-Dimensional Rigid Body Dynamics,
{\it Amer. Math. Soc. Transl}, Ser. 2, {\bf 168}, (1995), 141-171

\bibitem{FZ} Yu. N. Fedorov, D. V. Zenkov: Discrete nonholonomic LL systems
on Lie groups, {\it Nonlineairy}, {\bf 18}, (2005), 2211-2241.

\bibitem{HK1} R. Hirota, K. Kimura: Discretization of Euler top, {\it Jour. Phys. Soc. Japan},
{\bf 69}, (2000), 627-630.

\bibitem{HKY} R. Hirota, K. Kimura, H. Yahagi: How to find conserved quantities of nonlinear discrete equations,
{\it J. Phys. A: Math. Gen.}, {\bf34}, (2001), 10377-10386.

\bibitem{J1} B. Jovanovi\' c: Some multidimensional integrable cases of nonholonomic rigid body dynamics,
{\it Reg. Chaotic Dynamics}, {\bf 8}, (2003),  125-132.

\bibitem{J} B. Jovanovi\' c: Geometry and integrability of Euler-Poincar\' e-Suslov equations, {\it Nonlinaerity},
{\bf 14}, (2001), 1555-1567

\bibitem{HK2}K. Kimura, R. Hirota: Discretization of the Lagrange top, {\it Jour. Phys. Soc. Japan},
{\bf 69}, (2000), 3193-3199.

\bibitem{K} V. V. Kozlov: On the integration theory of equations of nonholonomic mechanics, {\it Adv. Mech},
{\bf 8}, (1985), 85-107, [in Russian].

\bibitem {LMS} M. de Le\' on, D. Mart\' in de Diego, A. Santamar\' ia Merino: Geometric integrators and Nonholonomic Mechanics, {\it J. Math. Phys.},
45(3), (2004), 1042-1064.
q
\bibitem {MV} J. Moser, A.P. Veselov: Discrete version of some classical integrable systems
and factorization of matrix polynomials, {\it Commun. Math. Phys}, {\bf 139}, (1991), 217-243.

\bibitem {PS} M. Petrera, Yu. Suris: On the Hamiltonian structure of Hirota-Kimura
discretization of the Euler top, arXiv:math-ph:0707.4382v1, (2007).

\bibitem {Sur} Yu. Suris: {\it The problem of integrable discretization: Hamiltonian
approach}, Progres in Mathematics, {\bf 219}, Birkh\"{a}user Verlag, Basel, (2003).

\bibitem {Sus} G. Suslov: {\it Theoretical mechanics}, Gostekizdat, Moskow-Leningrad, (1946), [in Russian].

\bibitem {ZB} D. V. Zenkov, A. M. Bloch: Dynamics of the $n$-Dimensional Suslov problem, {\it J. Geom. Phys.}, {\bf 34}, (2000), 121-136.

\end{thebibliography}
\end{document}